\let\proof\relax   
\newtheorem{lemma}{Lemma}
\newtheorem{theorem}{Theorem}
\newtheorem{conjecture}{Conjecture}
\newtheorem{remark}{Remark}
\newcommand*{\transpose}{%
  {\mathpalette\@transpose{}}%
}
\begin{document}

\newcommand{\SB}[3]{
\sum_{#2 \in #1}\biggl|\overline{X}_{#2}\biggr| #3
\biggl|\bigcap_{#2 \notin #1}\overline{X}_{#2}\biggr|
}

\newcommand{\Mod}[1]{\ (\textup{mod}\ #1)}

\newcommand{\overbar}[1]{\mkern 0mu\overline{\mkern-0mu#1\mkern-8.5mu}\mkern 6mu}

\makeatletter
\newcommand*\nss[3]{%
  \begingroup
  \setbox0\hbox{$\m@th\scriptstyle\cramped{#2}$}%
  \setbox2\hbox{$\m@th\scriptstyle#3$}%
  \dimen@=\fontdimen8\textfont3
  \multiply\dimen@ by 4             
  \advance \dimen@ by \ht0
  \advance \dimen@ by -\fontdimen17\textfont2
  \@tempdima=\fontdimen5\textfont2  
  \multiply\@tempdima by 4
  \divide  \@tempdima by 5          
  \ifdim\dimen@<\@tempdima
    \ht0=0pt                        
    \@tempdima=\fontdimen5\textfont2
    \divide\@tempdima by 4          
    \advance \dimen@ by -\@tempdima 
    \ifdim\dimen@>0pt
      \@tempdima=\dp2
      \advance\@tempdima by \dimen@
      \dp2=\@tempdima
    \fi
  \fi
  #1_{\box0}^{\box2}%
  \endgroup
  }
\makeatother

\makeatletter
\renewenvironment{proof}[1][\proofname]{\par
  \pushQED{\qed}%
  \normalfont \topsep6\p@\@plus6\p@\relax
  \trivlist
  \item[\hskip\labelsep
        \itshape
    #1\@addpunct{:}]\ignorespaces
}{%
  \popQED\endtrivlist\@endpefalse
}
\makeatother

\makeatletter
\newsavebox\myboxA
\newsavebox\myboxB
\newlength\mylenA

\newcommand*\xoverline[2][0.75]{%
    \sbox{\myboxA}{$\m@th#2$}%
    \setbox\myboxB\null
    \ht\myboxB=\ht\myboxA%
    \dp\myboxB=\dp\myboxA%
    \wd\myboxB=#1\wd\myboxA
    \sbox\myboxB{$\m@th\overline{\copy\myboxB}$}
    \setlength\mylenA{\the\wd\myboxA}
    \addtolength\mylenA{-\the\wd\myboxB}%
    \ifdim\wd\myboxB<\wd\myboxA%
       \rlap{\hskip 0.5\mylenA\usebox\myboxB}{\usebox\myboxA}%
    \else
        \hskip -0.5\mylenA\rlap{\usebox\myboxA}{\hskip 0.5\mylenA\usebox\myboxB}%
    \fi}
\makeatother

\xpatchcmd{\proof}{\hskip\labelsep}{\hskip3.75\labelsep}{}{}

\pagestyle{plain}

\title{\fontsize{21}{28}\selectfont Single-Server Multi-Message Individually-Private Information Retrieval with Side Information}

\author{Anoosheh Heidarzadeh, Swanand Kadhe, Salim El Rouayheb, and Alex Sprintson\thanks{A.~Heidarzadeh and A.~Sprintson are with the Department of Electrical and Computer Engineering, Texas A\&M University, College Station, TX 77843 USA (E-mail: \{anoosheh,spalex\}@tamu.edu).}\thanks{S.~Kadhe is with the Department of Electrical Engineering and Computer Sciences, University of California, Berkeley, CA 94720 USA (E-mail: swanand.kadhe@berkeley.edu).}\thanks{S.~El Rouayheb is with the Department of Electrical and Computer Engineering, Rutgers University, Piscataway, NJ 08854 USA (E-mail: sye8@soe.rutgers.edu).}}


\maketitle 

\thispagestyle{plain}

\begin{abstract} 
We consider a multi-user variant of the private information retrieval problem described as follows. Suppose there are $D$ users, each of which wants to privately retrieve a distinct message from a server with the help of a \emph{trusted agent}. We assume that the agent has a random subset of $M$ messages that is not known to the server. The goal of the agent is to collectively retrieve the users' requests from the server. For protecting the privacy of users, we introduce the notion of \emph{individual-privacy} -- the agent is required to protect the privacy only for each individual user (but may leak some correlations among user requests). We refer to this problem as \emph{Individually-Private Information Retrieval with Side Information (IPIR-SI)}.

We first establish a lower bound on the capacity, which is defined as the maximum achievable download rate, of the IPIR-SI problem by presenting a novel achievability protocol. Next, we characterize the capacity of IPIR-SI problem for $M = 1$ and ${D = 2}$. In the process of characterizing the capacity for arbitrary $M$ and $D$ we present a novel combinatorial conjecture, that may be of independent interest.
\end{abstract}

\section{introduction}
\label{sec:intro}
In the conventional {Private Information Retrieval (PIR) problem}, a user wants to privately download a message belonging to a database with copies stored on a single or multiple remote servers (see~\cite{yekhanin2010private}). 
The multiple-server PIR problem has been predominantly studied in the PIR literature, with  breakthrough results for the information-theoretic privacy model in the past few years (see e.g., \cite{Sun2017, JafarPIR3new, tajeddine2017private1,BU18}, and references therein).
The multi-message extension of the PIR problem enables a user to privately download multiple messages from the server(s)~\cite{Banawan2017,Maddah2018}. 
There have been a number of recent works on the PIR problem when some side information is present at the user~\cite{Tandon2017,Wei2017CacheAidedPI,Wei2017FundamentalLO,Chen2017side,Maddah2018}.



Recently, in~\cite{Kadhe2017,KGHERS2017}, the authors considered the single-server PIR with Side Information (PIR-SI) problem, wherein the user knows a random subset of messages that is unknown to the server. 
It was shown that the side information enables the user to substantially reduce the download cost and still achieve information-theoretic privacy for the requested message. The multi-message version of PIR-SI is considered in~\cite{HKGRS:2018,LG:2018}, and the case of coded side information is considered in~\cite{HKS2018}. Single-server multi-user PIR-SI problem wherein all users have the same demand but different side-information sets was considered in~\cite{LiG:18:MU}.

In this work, we consider the following scenario. Suppose there are $D$ users, each of which wants to privately retrieve a distinct message from a server. The users send their demands to a {\it trusted agent}. The agent has a subset of $M$ messages, unknown to the server. This side information could have been obtained from the users or from previous interactions with the server. Followed by aggregating the users' requests, the agent then collectively retrieves information from the server. 

One natural solution for the agent to achieve privacy during the retrieval is to successively use the PIR-SI protocol in~\cite{Kadhe2017} for each request. However, the agent can achieve much higher download rate while preserving the privacy collectively for all the users by using the multi-message PIR protocol in~\cite{HKGRS:2018,LG:2018}. In this work, we introduce the notion of \emph{individual-privacy} where the agent is required to protect the privacy only for each individual user, and we refer to this problem as \emph{Individually-Private Information Retrieval with Side Information (IPIR-SI)}. We seek to answer the following questions: is it possible to further increase the download rate when individual-privacy is required? Moreover, what are the fundamental limits on the download rate for the IPIR-SI problem? We answer the first question affirmatively and take the first steps towards answering the second question. 

\subsection{Main Contributions}

We first establish a lower bound on the capacity of the IPIR-SI problem (where the capacity is defined as the supremum of all achievable download rates) by presenting a new protocol which builds up on the Generalized Partition and Code (GPC) protocol in~\cite{HKGRS:2018}. Next, we characterize the capacity of IPIR-SI problem for $M = 1$ and ${D = 2}$. In the process of characterizing the capacity for arbitrary $M$ and $D$ we present a novel combinatorial conjecture, that may be of independent interest. 

For $M=1$ and arbitrary $D$, our conjecture relates the size of an external mother vertex-set (i.e., a minimal subset of nodes from which any other node with nonzero our-degree can be reached via a directed path) of any directed graph $G$ with certain bounds on the in-degree and out-degree of the nodes, to the size of an internal mother vertex-set (i.e., a minimal subset of nodes from which any other node with nonzero in-degree can be reached via a directed path) of the transpose of $G$ which is obtained by reversing the direction of all edges in $G$. 


\section{Problem Formulation}\label{sec:SN}
Let $\mathbb{F}_q$ be a finite field of size $q$, and let $\mathbb{F}_{q^m}$ be an extension field of $\mathbb{F}_q$ for some integer $m\geq 1$. Let $L \triangleq m\log_2 q$, and let ${\mathbb{F}_q^{\times} \triangleq \mathbb{F}_q\setminus \{0\}}$. For a positive integer $i$, we denote $\{1,\dots,i\}$ by $[i]$. Also, let $K\geq 1$, $M\geq 1$, and $D\geq 1$ be arbitrary integers such that $D+M\leq K$. 


Suppose that there is a server storing a set of $K$ messages $X_1,\dots,X_K$, with each message $X_i$ being independently and uniformly distributed over $\mathbb{F}_{q^m}$, i.e., ${H(X_1) = \dots = H(X_K) = L}$ and $H(X_1,\dots,X_K) = KL$. Also, suppose that there are $D$ users, each of which demands one distinct message $X_j$. Let $W$ be the index set of the users' demanded messages. The users send the indices of their demanded messages to a trusted agent, called \emph{aggregator}, who knows $M$ messages $X_S\triangleq \{X_j\}_{j\in S}$ for some $S\subset [K]$, $|S|=M$, $S\cap W=\emptyset$. Then, the aggregator retrieves the $D$ messages $X_W\triangleq \{X_j\}_{j\in W}$ from the server. We refer to $W$ as the \emph{demand index set}, $X_W$ as the \emph{demand}, $D$ as the \emph{demand size}, $S$ as the \emph{side information index set}, $X_S$ as the \emph{side information}, and $M$ as the \emph{side information size}. 

Denote by $\mathcal{W}$ and $\mathcal{S}$ the set of all subsets of $\mathcal{K}\triangleq [K]$ of size $D$ and $M$, respectively. Also, let $\boldsymbol{S}$ and $\boldsymbol{W}$ be two random variables representing $S$ and $W$, respectively. Denote the probability mass function (PMF) of $\boldsymbol{S}$ by $p_{\boldsymbol{S}}(\cdot)$ and the conditional PMF of $\boldsymbol{W}$ given $\boldsymbol{S}$ by $p_{\boldsymbol{W}|\boldsymbol{S}}(\cdot|\cdot)$. We assume that $\boldsymbol{S}$ is uniformly distributed over $\mathcal{S}$, i.e., $p_{\boldsymbol{S}}(S) = \binom{K}{M}^{-1}$ for all $S\in \mathcal{S}$, and $\boldsymbol{W}$ (given $\boldsymbol{S}=S$) is uniformly distributed over ${\{W\in \mathcal{W}:W\cap S=\emptyset\}}$, i.e., 
\begin{equation*}
p_{\boldsymbol{W}|\boldsymbol{S}}(W|S) = 
\left\{\begin{array}{ll}
\binom{K-M}{D}^{-1}, & W\in \mathcal{W}, W\cap S=\emptyset,\\	
0, & \text{otherwise}.
\end{array}\right.	
\end{equation*} 
 
We assume that the server knows the size of $W$ (i.e., $D$) and the size of $S$ (i.e., $M$), as well as the PMF $p_{\boldsymbol{S}}(\cdot)$ and the conditional PMF $p_{\boldsymbol{W}|\boldsymbol{S}}(\cdot|\cdot)$, whereas the realizations $S$ and $W$ are unknown to the server a priori. 

For any $S$ and $W$, in order to retrieve $X_W$, the aggregator sends to the server a query $Q^{[W,S]}$, which is a (potentially stochastic) function of $W$, $S$, and $X_S$. The query $Q^{[W,S]}$ must protect from the privacy of the demand index of every user individually from the server, i.e., \[\mathbb{P}(j\in \boldsymbol{W}| Q^{[W,S]})= \mathbb{P}(j\in \boldsymbol{W})=\frac{D}{K}\] for all $j \in \mathcal{K}$. We refer to this condition as the \emph{individual-privacy condition}. Note that the individual-privacy condition is weaker than the \emph{joint-privacy condition}, also known as the \emph{$W$-privacy condition}, being studied in~\cite{HKGRS:2018}, where the privacy of all indices in the demand index set must be protected jointly. 
The notions of individual privacy and joint privacy coincide for $D=1$, which was previously settled in~\cite{Kadhe2017}, and hence, in this work, we focus on $D\geq 2$.

Upon receiving $Q^{[W,S]}$, the server sends to the aggregator an answer $A^{[W,S]}$, which is a (deterministic) function of the query $Q^{[W,S]}$ and the messages in $X$, i.e., \[H(A^{[W,S]}| Q^{[W,S]},\{X_j\}_{j\in \mathcal{K}}) = 0.\] The answer $A^{[W,S]}$ along with the side information $X_S$ must enable the aggregator to retrieve the demand $X_W$, i.e., \[H(X_W| A^{[W,S]}, Q^{[W,S]}, X_S)=0.\] This condition is referred to as the \emph{recoverability condition}. 

The problem is to design a query $Q^{[W,S]}$ and an answer $A^{[W,S]}$ (for any $W$ and $S$) that satisfy the individual-privacy and recoverability conditions. We refer to this problem as \emph{single-server multi-message Individually-Private Information Retrieval with Side Information (IPIR-SI)}. 

A collection of $Q^{[W,S]}$ and $A^{[W,S]}$ (for all $W$ and $S$) which satisfy the individual-privacy and recoverability conditions, is referred to as an \emph{IPIR-SI protocol}. We define the \emph{rate} of an IPIR-SI protocol as the ratio of the entropy of the demand messages, i.e., $DL$, to the average entropy of the answer, i.e., $H(A^{[\boldsymbol{W},\boldsymbol{S}]})=\sum H(A^{[W,S]})p_{\boldsymbol{W}|\boldsymbol{S}}(W|S)p_{\boldsymbol{S}}(S)$, where the average is taken over all $W$ and $S$. The \emph{capacity} of the IPIR-SI problem is also defined as the supremum of rates over all IPIR-SI protocols. 

In this work, our goal is to characterize the capacity of the IPIR-SI problem, and to design an IPIR-SI protocol that achieves the capacity. 


\section{Main Results}
In this section, we present our main results. Theorem~\ref{thm:IPIRSI-Gen} provides a lower bound on the capacity of IPIR-SI problem for $M\geq 1$ and $D\geq 2$, and Theorem~\ref{thm:IPIRSI-Spc} characterizes the capacity of IPIR-SI problem for the special case of $M=1$ and $D=2$. The proofs of Theorems~\ref{thm:IPIRSI-Gen} and~\ref{thm:IPIRSI-Spc} are given in Sections~\ref{sec:IPIRSI-Gen} and~\ref{sec:IPIRSI-Spc}, respectively.  

\begin{theorem}\label{thm:IPIRSI-Gen}
The capacity of IPIR-SI problem with $K$ messages, side information size $M\geq 1$, and demand size $D\geq 2$ is lower bounded by $D(K-M\lfloor\frac{K}{M+D}\rfloor)^{-1}$ if $\frac{K-D}{M+D}\leq \lfloor \frac{K}{M+D}\rfloor$, and by $\lceil \frac{K}{M+D}\rceil^{-1}$ otherwise.
\end{theorem}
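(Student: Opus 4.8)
The plan is to prove this lower bound constructively, by exhibiting an IPIR-SI protocol of the stated rate; the protocol generalizes the Partition-and-Code mechanism behind the GPC protocol of~\cite{HKGRS:2018}, the point being that individual (rather than joint) privacy lets the aggregator download strictly fewer decoy symbols. Write $t\triangleq\lfloor K/(M+D)\rfloor$ and $r\triangleq K-t(M+D)$, so $0\le r<M+D$; the first regime of the theorem is $0\le r\le D$ and the second is $D<r<M+D$.

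For the \emph{construction}, in the first regime I would have the aggregator draw a random partition $\mathcal{B}=\{B_1,\dots,B_t\}$ of $\mathcal{K}$ into $t$ blocks of nearly-equal sizes (spreading the $r$ surplus indices across the blocks), conditioned on $W\cup S$ lying inside a single block $B^\star$; it then requests $|B_b|-M$ generic $\mathbb{F}_{q^m}$-linear combinations of $\{X_i:i\in B_b\}$ from each block $B_b$. In the second regime it uses $t+1$ blocks ($t$ of size $M+D$, one of size $r$) with $W\cup S$ inside a size-$(M+D)$ block, and requests $D$ generic combinations from each. The answer is the list of all requested combinations.

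\emph{Recoverability} and the \emph{rate} are then routine: from the $|B^\star|-M$ (resp.\ $D$) combinations drawn from $B^\star$ together with the $M$ values in $X_S\subseteq X_{B^\star}$, the aggregator solves a square full-rank system for $X_{B^\star}$ and hence $X_W$, which works whenever the generic coefficients make the relevant matrix invertible (true for $q^m$ large); and for generic coefficients the requested combinations are jointly uniform on $\mathbb{F}_{q^m}$, so $H(A)$ is $L$ times their number, namely $\sum_b(|B_b|-M)=K-Mt$ and $D(t+1)=D\lceil K/(M+D)\rceil$ in the two regimes, giving rates $D/(K-Mt)$ and $\lceil K/(M+D)\rceil^{-1}$.

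The work is in verifying the \emph{individual-privacy condition} $\mathbb{P}(j\in\boldsymbol{W}\mid Q)=D/K$ for every $j$. Since the generic coefficients carry no information beyond the block partition, and the partition is drawn from a family invariant under relabelings of $\mathcal{K}$, $\mathbb{P}(Q\mid W,S)$ is constant over all $(W,S)$ consistent with a realized query, so the posterior on $(\boldsymbol W,\boldsymbol S)$ is uniform over consistent pairs; the condition thus reduces to showing the number of consistent pairs with $j\in W$ is independent of $j$ and is $D/K$ times the total. When the blocks all have a common size $s$ --- e.g.\ when $t\mid K$, so $s=K/t$ --- this is the one-line identity $\frac1t\cdot\frac{s-M}{s}\cdot\frac{D}{s-M}=D/K$. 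The hard part will be the remaining case, where the blocks must take two distinct sizes: a plain uniform partition then over-counts indices in the larger blocks, so I expect one has to engineer the partition distribution (how the surplus indices are spread, perhaps together with a small mixture over block-shapes and a matching enlargement of the supports of some combinations so that all requested combinations look alike to the server) to re-balance the per-index counts to exactly $D/K$; establishing that combinatorial identity is the technical heart. (In the second regime this is moot, since the bound $\lceil K/(M+D)\rceil^{-1}$ is already achieved by the GPC protocol of~\cite{HKGRS:2018}, which even gives joint privacy.)
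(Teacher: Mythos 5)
Your high-level plan is the paper's plan — partition the indices into blocks whose total excess over the side-information size is $K-M\lfloor K/(M+D)\rfloor$ (resp.\ $D\lceil K/(M+D)\rceil$), download one generic linear combination per excess unit, and rely on randomizing the block assignment to blind the server — so the rate arithmetic and recoverability are fine. But the proposal has a genuine gap exactly where you flag it: the privacy verification when the blocks cannot all have the same size. That case is not a corner; whenever $K\not\equiv 0\ (\mathrm{mod}\ M+D)$ it is the whole problem, and you only ``expect'' that ``one has to engineer the partition distribution'' and leave the combinatorial identity unestablished. The paper's entire proof of Lemma~\ref{lem:Ach} is that engineering, and it is not a small adjustment to your scheme: it uses $\beta$ blocks of size $M+D$ plus one remainder block $Q_0$ of size $\rho$ (not near-equal blocks), and — crucially — a two-point probabilistic mixture (with weights $\theta_1/(\theta_1+\theta_2)$ or $\theta_1/(\theta_1+\theta_3)$) in which $W$ is \emph{sometimes split across} $Q_0$ and $Q_1$ rather than always placed inside one block as you stipulate. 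That splitting is what makes the posterior $\mathbb{P}(j\in\boldsymbol W\mid Q)$ equalizable between indices in $Q_0$ and indices in $Q_i$; with your ``$W\cup S$ inside a single near-equal block'' constraint it is not clear the per-index probabilities can be balanced at all, and your fallback (``enlargement of the supports of some combinations so all requested combinations look alike'') would change the number or structure of the downloaded symbols, which would have to be re-checked against the claimed rate.

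Your parenthetical for the second regime is also not a safe shortcut. The naive ``$W\cup S$ in a size-$(M+D)$ block, download $D$ combinations per block'' protocol fails even individual privacy there: a message $j$ lying in the size-$\rho$ remainder block would have $\mathbb{P}(j\in\boldsymbol W\mid Q)=0$. And the joint-privacy GPC rate of~\cite{HKGRS:2018} with side-information size $M$ does \emph{not} in general equal $\lceil K/(M+D)\rceil^{-1}$; the paper's own Remark~1 notes that the joint-privacy bound only matches Theorem~\ref{thm:IPIRSI-Gen} after replacing $M$ by $MD$, which is a strictly stronger side-information assumption for $D\geq 2$. The paper handles regime two with the same kind of $\theta$-weighted mixture (Case~(ii) of Step~1), so you cannot defer that regime to an existing protocol; it needs the same unfinished privacy argument.
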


The proof is based on constructing an IPIR-SI protocol that achieves the rate ${D(K-M\lfloor K/(M+D)\rfloor)^{-1}}$ or ${\lceil K/(M+D)\rceil^{-1}}$, depending on $K$, $M$, and $D$ (see, for details, Section~\ref{sec:IPIRSI-Gen}). This protocol, which is a variation of the Generalized Partition and Code (GPC) protocol previously proposed in~\cite{HKGRS:2018} for single-server multi-message PIR-SI where joint-privacy is required, is referred to as \emph{GPC for Individual Privacy}, or \emph{GPC-IP} for short.

\begin{remark}
\emph{A lower bound on the capacity of single-server multi-message PIR with side information, when the privacy of the demand indices must be protected jointly, was previously presented in~\cite[Theorem~1]{HKGRS:2018}. Surprisingly, this lower bound reduces to the lower bound of Theorem~\ref{thm:IPIRSI-Gen} where $M$ (in ~\cite[Theorem~1]{HKGRS:2018}) is replaced by $MD$. This correspondence implies that each message in the side information, when achieving individual-privacy, can be as effective as $D$ side information messages when joint-privacy is required. This also suggests that, as one would expect, relaxing the privacy condition (from joint to individual) can increase the capacity.}  
\end{remark}

\begin{theorem}\label{thm:IPIRSI-Spc}
The capacity lower bound given in Theorem~\ref{thm:IPIRSI-Gen} is tight for $M=1$ and $D=2$.
\end{theorem}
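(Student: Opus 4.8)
The plan is to establish a matching converse (upper bound on capacity) to the achievability bound of Theorem~\ref{thm:IPIRSI-Gen} in the specialized regime $M=1$, $D=2$. First I would specialize the general lower bound: with $M=1$ and $D=2$, we have $\lfloor K/(M+D)\rfloor = \lfloor K/3\rfloor$, and the two cases of Theorem~\ref{thm:IPIRSI-Gen} become (i) rate $2/(K-\lfloor K/3\rfloor)$ when $(K-2)/3 \le \lfloor K/3\rfloor$ and (ii) rate $1/\lceil K/3\rceil$ otherwise. A quick case analysis on $K \bmod 3$ collapses these into a clean closed form in $K$, so the target converse is to show that no IPIR-SI protocol can download fewer than that many $L$-symbols on average.

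The core of the argument is the information-theoretic converse. I would start from the recoverability condition $H(X_W \mid A^{[W,S]}, Q^{[W,S]}, X_S) = 0$ and the individual-privacy condition $\mathbb{P}(j \in \boldsymbol{W} \mid Q^{[W,S]}) = D/K$ for all $j$, and bound $H(A^{[\boldsymbol{W},\boldsymbol{S}]})$ from below. The standard technique (as in the PIR-SI converses of~\cite{Kadhe2017,HKGRS:2018}) is: fix a realization $(W,S)$ and its query $Q = Q^{[W,S]}$; because of individual privacy, this same query must be consistent with $(W',S')$ for a rich family of alternative demand/side-information pairs — in particular one can "chain" through a sequence of realizations, each sharing a user index with the query's support, so that from the single answer $A^{[W,S]}$ together with successive side-information sets one can decode a growing collection of messages. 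Each step of the chain forces $H(A^{[W,S]}) \ge$ (number of newly decodable messages)$\cdot L$ minus the entropy already accounted for. Optimizing the chaining to cover the largest possible set of message indices using $M=1$ side-information symbol per step and $D=2$ decoded messages per realization yields the bound $H(A^{[W,S]}) \ge (K - \lfloor K/3\rfloor) L / 2$ (or the second-case analogue), and averaging over $(\boldsymbol{W},\boldsymbol{S})$ preserves it since the bound holds pointwise.

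Concretely, the steps in order are: (1) reduce the Theorem~\ref{thm:IPIRSI-Gen} expression to an explicit function of $K$ for $M=1$, $D=2$; (2) prove a structural lemma that, for any valid query, individual privacy forces the existence of a long chain of alternative $(W',S')$ pairs compatible with it — this is where the combinatorial conjecture alluded to in the introduction would, for general $M,D$, be needed, but for $M=1$, $D=2$ the relevant directed-graph statement (relating external and internal mother vertex-sets) can be verified directly; (3) run the decoding/entropy argument along that chain to get the pointwise lower bound on $H(A^{[W,S]})$; (4) average and compare with the achievable rate from Theorem~\ref{thm:IPIRSI-Gen} to conclude tightness.

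The main obstacle is step (2)–(3): making the chaining argument yield \emph{exactly} the right constant rather than a loose bound. The subtlety is that the user indices in $W$ and the side-information index in $S$ interact — a naive chain that only tracks decodable messages overcounts or undercounts depending on how the floor function in $\lfloor K/3\rfloor$ arises from partitioning $[K]$ into blocks of size $M+D = 3$. I expect the delicate part is showing the chain can be made to have length essentially $\lfloor K/3 \rfloor$ and no longer (the matching upper bound on chain length, i.e.\ that the protocol need not do better), which is precisely the $D=2$, $M=1$ instance of the combinatorial conjecture; verifying that instance by an explicit graph construction (out-degree/in-degree constrained digraph on roughly $K/3$ nodes) is the technical heart, and the rest is bookkeeping with the chain rule for entropy.
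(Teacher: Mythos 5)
Your high-level plan is aligned with the paper's strategy: the paper also reduces the converse to showing that, for any protocol satisfying two necessary structural conditions (its Lemmas~\ref{lem:2} and~\ref{lem:3}), one can find a small ``seed'' set $I^{*}$ of message indices from which the rest of the database becomes decodable, so that $H(A^{[W,S]}\mid Q^{[W,S]})\geq K-|I^{*}|$; and the paper likewise abstracts this into a directed-graph statement (Conjecture~\ref{conj:2}) and proves the $D=2$ instance (Lemma~\ref{lem:4}). So you correctly identified the skeleton.

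However, there are genuine gaps and errors in the details, and the technical heart is missing. First, your entropy target $H(A^{[W,S]})\geq (K-\lfloor K/3\rfloor)L/2$ is off by a factor of $2$: with rate defined as $DL/H(A)$, you need $H(A)\geq (K-\lfloor K/3\rfloor)L$ to obtain the upper bound $2/(K-\lfloor K/3\rfloor)$; the ``$/2$'' gives a converse that is twice as large as the achievable rate and hence not tight. Second, your description of the combinatorial step is muddled. You say the relevant statement is ``verified directly'' via ``an explicit graph construction (\ldots digraph on roughly $K/3$ nodes),'' but the object in play is a digraph $G$ on \emph{all $K$ nodes} (one per message index), where an edge $v\to u$ records that $X_u$ is decodable from $A,Q,X_v$; the quantity being bounded, $\mu_{\mathrm{ext}}(G)\leq\lfloor K/3\rfloor$, is the size of a minimal external mother vertex-set, not a chain length, and the claim must hold \emph{for every} digraph satisfying the structural constraints (in-degree $\geq 1$; out-degree $0$ or $\geq 2$; $\mu_{\mathrm{int}}(G^{\mathsf{T}})\geq\lceil K/2\rceil$ from the privacy union bound), not for a single constructed example. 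Third, and most importantly, you never actually prove this bound; the paper's Lemma~\ref{lem:4} does so by contradiction: assuming $\mu_{\mathrm{ext}}(G)=n>\lfloor K/3\rfloor$, it partitions $V$ into $n$ parts each rooted at a seed $v_j$, argues via minimality and the in-degree condition that each part contains a second distinct node $u_j$, uses the out-degree~$\geq 2$ condition on two-element parts to show every $v_j,u_j$ is reachable in $G^{\mathsf{T}}$ from $J^{*}=V\setminus\{v_j,u_j\}_j$, and concludes $\mu_{\mathrm{int}}(G^{\mathsf{T}})\leq K-2n<\lceil K/2\rceil$, a contradiction. Without this (or an equivalent) argument, your proposal identifies the right target but does not deliver the converse.
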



The proof of converse is based on a mixture of new combinatorial and information-theoretic arguments relying on two necessary conditions imposed by the individual-privacy and recoverability conditions (see Lemmas~\ref{lem:2} and~\ref{lem:3}). 


\begin{remark}
\emph{As we will show later, the tightness of the result of Theorem~\ref{thm:IPIRSI-Gen} for arbitrary $M$ and $D$, which remains open in general, is conditional upon the correctness of a novel conjecture in combinatorics, formally stated in Section~\ref{sec:IPIRSI-Spc}, which may be of independent interest. Interestingly, for ${M=1}$ and ${D\geq 2}$, our conjecture relates the size of an external mother vertex-set of any directed graph $G$, whose nodes have in-degree at least one and out-degree either zero or at least $D$, to the size of an internal mother vertex-set of the transpose of $G$ (which is the graph obtained by reversing the direction of all edges in $G$). (The notions of external and internal mother vertex-sets, formally defined in Section~\ref{sec:IPIRSI-Spc}, are two generalizations of the notion of the mother vertex in graph theory.) In this work, we prove the simplest non-trivial case of this conjecture for $M=1$ and $D=2$, and leave the complete proof for the future work.}	
\end{remark}

\section{Proof of Theorem~\ref{thm:IPIRSI-Gen}}\label{sec:IPIRSI-Gen}

In this section, we propose an IPIR-SI protocol, referred to as \emph{Generalized Partition and Code for Individual Privacy (GPC-IP)}, achieving the rate lower bound of Theorem~\ref{thm:IPIRSI-Gen}. 

Define $\alpha\triangleq M+D$, $\beta\triangleq \lfloor K/\alpha\rfloor$, and $\rho\triangleq K-\alpha\beta$. (Note that $0\leq \rho<\alpha$.) Also, define $\gamma\triangleq \min\{\rho,D\}$. Assume that $q\geq \alpha$, and let $\omega_1,\dots,\omega_{\alpha}$ be $\alpha$ distinct elements from $\mathbb{F}_q$. 

\textit{\bf GPC-IP Protocol:} This protocol consists of four steps as follows: 

\textbf{\it Step 1:} First, the aggregator constructs a set $Q_{0}$ of size $\rho$ from the indices in $\mathcal{K}$, and $\beta$ disjoint sets $Q_1,\dots,Q_{\beta}$ (also disjoint from $Q_0$), each of size $\alpha$, from the indices in $\mathcal{K}$, where the construction procedure is described below. 

Define \[\theta_1\triangleq \frac{\binom{\alpha-1}{M}}{\prod_{i=1}^{\beta-1}\binom{K-i\alpha}{\alpha}},\] \[\theta_2 \triangleq \frac{\binom{\alpha-1}{M+\rho}\binom{M+\rho}{M}(\frac{\alpha\beta}{D-\rho}-1)}{\binom{D}{\rho}\binom{K-\alpha}{\rho}\prod_{i=1}^{\beta-1}\binom{K-i\alpha-\rho}{\alpha}},\] and \[\theta_3\triangleq \frac{\beta\binom{\rho}{D}\binom{K-\rho}{\alpha-\rho}}{\binom{M}{\rho-D}\prod_{i=0}^{\beta-1}\binom{K-i\alpha-\rho}{\alpha}}.\] 

There are two cases based on $\rho$: (i) ${\rho<D}$, and (ii) ${\rho\geq D}$. 

\emph{Case (i):} With probability $\frac{\theta_1}{\theta_1+\theta_2}$, the aggregator places $\rho$ randomly chosen elements (demand indices) from $W$ into $Q_0$ and the remaining elements in $W$ along with all elements in $S$ (side information indices) into $Q_1$. Then the aggregator randomly places all other elements in $\mathcal{K}$ into $Q_2,\dots,Q_{\beta}$ and the remaining positions in $Q_1$; otherwise, with probability $\frac{\theta_2}{\theta_1+\theta_2}$, the aggregator places all elements in $S\cup W$ into $Q_1$, and randomly places all other elements in $\mathcal{K}$ into $Q_0,Q_2,\dots,Q_{\beta}$. 

\emph{Case (ii):} With probability $\frac{\theta_1}{\theta_1+\theta_3}$, the aggregator places all elements in $W$ along with $\rho-D$ randomly chosen elements from $S$ into $Q_0$, and places the remaining elements of $S$ together with all other elements in $\mathcal{K}$ into $Q_1,\dots,Q_{\beta}$ at random; otherwise, with probability $\frac{\theta_3}{\theta_1+\theta_3}$, the aggregator places all elements in $S\cup W$ into $Q_1$, and randomly places all other elements in $\mathcal{K}$ into $Q_0,Q_2,\dots,Q_{\beta}$.

Next, the aggregator creates a collection $Q'$ of $\gamma$ sequences $Q'_{1},\dots,Q'_{\gamma}$, each of length $\rho$, such that $Q'_{i}  = \{\omega^{i-1}_1,\dots,\omega^{i-1}_{\rho}\}$ for $i\in [\gamma]$, and a collection $Q''$ of $D$ sequences $Q''_{1},\dots,Q''_{D}$, each of length $\alpha$, such that $Q''_{i}  = \{\omega^{i-1}_1,\dots,\omega^{i-1}_{\alpha}\}$ for $i\in [D]$.

\textbf{\it Step 2:} The aggregator constructs ${Q^{*}_0 = (Q_0,Q')}$ and ${Q^{*}_i = (Q_i,Q'')}$ for $i\in [\beta]$, and sends to the server the query ${Q^{[W,S]} = \{Q^{*}_{0},Q^{*}_{\sigma^{-1}(1)},\dots,Q^{*}_{\sigma^{-1}(\beta)}\}}$ for a randomly chosen permutation $\sigma: [\beta]\rightarrow [\beta]$.

\textbf{\it Step 3:} By using ${Q^{*}_0 = (Q_0,Q')}$ and ${Q^{*}_i = (Q_i,Q'')}$ for $i\in [\beta]$, the server computes ${A_0 = \{A^1_0,\dots,A^{\gamma}_0\}}$ by ${A^j_0 = \sum_{l=1}^{\rho} \omega_l^{j-1} X_{i_l}}$ for ${j\in [\gamma]}$ where ${Q_0 = \{i_1,\dots,i_{\rho}\}}$, and computes ${A_i = \{A^1_i,\dots,A^D_i\}}$ for ${i\in [\beta]}$ by ${A^{j}_{i} = \sum_{l=1}^{\alpha} \omega^{j-1}_{l} X_{i_j}}$ for $j\in [D]$ where ${Q_i = \{i_1,\dots,i_{\alpha}\}}$. The server then sends to the aggregator the answer ${A^{[W,S]}=\{A_{0},A_{\sigma^{-1}(1)}\dots,A_{\sigma^{-1}(\beta)}\}}$.

\textbf{\it Step 4:} Upon receiving the answer from the server, the aggregator retrieves $X_j$ for any $j\in W\cap Q_0$ (or any $j\in W\cap Q_i$ for some $i\in [\beta]$) by subtracting off the contribution of the side information messages $\{X_i\}_{i\in S}$ from the $\gamma$ (or $D$) equations in $A_{0}$ (or $A_i$), and solving the resulting system of $\gamma$ (or $D$) linear equations with $\gamma$ (or $D$) unknowns.

\begin{lemma}\label{lem:Ach}
The GPC-IP protocol is an IPIR-SI protocol, and achieves the rate ${D(K-M\lfloor \frac{K}{M+D}\rfloor)^{-1}}$ if $\frac{K-D}{M+D}\leq \lfloor \frac{K}{M+D}\rfloor$, and the rate ${\lceil \frac{K}{M+D}\rceil^{-1}}$ otherwise.
\end{lemma}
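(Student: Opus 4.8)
The plan is to verify, in turn, the three requirements that make GPC-IP an IPIR-SI protocol with the claimed rate: (a) recoverability, (b) the individual-privacy condition, and (c) the download rate. Recoverability is the easiest: in Step~4 the aggregator must solve, for each of the $\beta+1$ index sets that contains a demand index, a Vandermonde-type linear system. The answers $A^j_0 = \sum_{l=1}^{\rho}\omega_l^{j-1}X_{i_l}$ for $j\in[\gamma]$ form a $\gamma\times\rho$ system whose coefficient matrix is a submatrix of a Vandermonde matrix built from the distinct nodes $\omega_1,\dots,\omega_{\alpha}$; after subtracting the (known) side-information contributions, the number of remaining unknowns in $Q_0$ is exactly the number of demand indices placed in $Q_0$, which by construction is $\min\{\rho,D\}=\gamma$ in the relevant case (resp.\ $\le D$ and at most $\alpha$ wide with $D\ge$ that many equations for each $Q_i$), so the system is uniquely solvable; I would just spell out in each of Case~(i) and Case~(ii) that the count of unknowns never exceeds the count of equations and the Vandermonde structure guarantees full rank. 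First I would therefore dispatch recoverability with a short paragraph-per-case argument.

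The core of the proof is privacy. I would show $\mathbb{P}(j\in\boldsymbol{W}\mid Q^{[W,S]}) = D/K$ for every $j\in\mathcal{K}$ by a counting argument over the pairs $(W,S)$ consistent with a given realized query. Because of the random permutation $\sigma$ in Step~2, a query reveals only the \emph{unordered} partition $\{Q_0,Q_1,\dots,Q_{\beta}\}$ of $\mathcal{K}$ into one block of size $\rho$ and $\beta$ blocks of size $\alpha$ (the Vandermonde coefficient lists $Q',Q''$ are fixed, hence uninformative). For a fixed such partition, I would enumerate which $(W,S)$ pairs can produce it and with what probability, weighting by $p_{\boldsymbol S}(S)p_{\boldsymbol W\mid\boldsymbol S}(W\mid S)$ (both uniform) and by the branch probabilities $\theta_1/(\theta_1+\theta_2)$ etc.\ and the internal uniform randomizations. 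The constants $\theta_1,\theta_2,\theta_3$ are engineered precisely so that, after this weighting, the posterior over the location of the demand set is ``balanced'': every index $j$ is equally likely (probability $D/K$) to lie in $\boldsymbol W$. Concretely, in Case~(i), branch~1 (with prob.\ $\propto\theta_1$) puts $\rho$ demand indices in $Q_0$ and $D-\rho$ in $Q_1$ together with all of $S$, while branch~2 (prob.\ $\propto\theta_2$) hides $S\cup W$ entirely inside $Q_1$; the $\theta_i$ are chosen so these two scenarios, summed, give each $j$ the correct marginal. I expect the bulk of the work — and the main obstacle — to be this bookkeeping: correctly counting, for each index $j$ and each block type ($Q_0$ vs.\ some $Q_i$), the number of $(W,S)$ realizations and branch choices that place $j$ in $\boldsymbol W$, and checking that the $\theta$-weighted total is independent of $j$. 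I would organize this as: (i) fix the realized partition type; (ii) by symmetry reduce to computing $\mathbb{P}(j\in\boldsymbol W\mid \text{partition})$ for a representative $j$ in $Q_0$ and a representative $j$ in a size-$\alpha$ block; (iii) expand each as a ratio of weighted counts and simplify using the definitions of $\theta_1,\theta_2,\theta_3$ and standard binomial identities, arriving at $D/K$ in every case.

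Finally, the rate: the answer consists of $\gamma$ symbols from $A_0$ and $D$ symbols from each of the $\beta$ blocks $A_i$, for a total of $\gamma + D\beta$ field elements of $L$ bits each, and one checks $H(A^{[W,S]}) = (\gamma+D\beta)L$ is constant over $(W,S)$ (the answer length does not depend on the realization), so the rate is $DL/((\gamma+D\beta)L) = D/(\gamma+D\beta)$. It remains to match this to the statement: when $\tfrac{K-D}{M+D}\le\lfloor\tfrac{K}{M+D}\rfloor$, i.e.\ $\rho=K-\alpha\beta\le D$, we have $\gamma=\rho$ and $\gamma+D\beta = \rho + D\beta = (K-\alpha\beta) + D\beta = K - M\beta = K - M\lfloor\tfrac{K}{M+D}\rfloor$, giving rate $D(K-M\lfloor K/(M+D)\rfloor)^{-1}$; otherwise $\rho>D$ so $\gamma=D$ and $\gamma+D\beta = D(\beta+1) = D\lceil K/(M+D)\rceil$ (using $\rho>0$), giving rate $\lceil K/(M+D)\rceil^{-1}$. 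This last paragraph is a routine arithmetic reconciliation. I would present the proof in the order: recoverability, then rate (both short), then the privacy counting argument as the main body.
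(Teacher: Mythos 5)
Your plan matches the paper's proof essentially step for step: the same rate arithmetic via $\gamma+D\beta$ split into the $\rho\le D$ and $\rho>D$ cases, the same symmetry reduction of individual-privacy to comparing a representative $j\in Q_0$ against a representative $j\in Q_i$, and the same Bayes/weighted-counting argument in which the constants $\theta_1,\theta_2,\theta_3$ are precisely what make the two posteriors agree (the paper carries out the bookkeeping you correctly identify as the bulk of the work, via the likelihoods $\mathbb{P}(Q\mid\boldsymbol W,\boldsymbol S)$ in its displayed equations). The only place you go beyond the paper is spelling out the Vandermonde full-rank argument for recoverability, which the paper simply asserts is ``easy to see.''
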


\begin{proof}
If $\frac{K-D}{M+D}\leq \lfloor \frac{K}{M+D}\rfloor$, then $\rho< D$. Thus, $\gamma = \rho$. In this case, $H(A_0) = \rho L$ and $H(A_i) = DL$ for $i\in [\beta]$, where $L=H(X_i)$ for all $i$. Thus, for any $W\in \mathcal{W}$ and $S\in \mathcal{S}$ such that $S\cap W=\emptyset$, we have $H(A^{[W,S]})=H(A_0,\dots,A_{\gamma})=\sum_{i=0}^{\beta} H(A_i)=(\rho+\beta D)L$. Thus, in this case, the rate is $DL/H(A^{[\boldsymbol{W},\boldsymbol{S}]}) = DL/H(A^{[W,S]})=D/(\rho+\beta D)$, or equivalently, $D(K-M\lfloor\frac{K}{M+D}\rfloor)^{-1}$. If $\frac{K-D}{M+D}> \lfloor \frac{K}{M+D}\rfloor$, then $\rho\geq D$. Thus, ${\gamma = D}$. In this case, $H(A_0) = H(A_i) = DL$ for $i\in [\beta]$, and thus, ${H(A^{[W,S]})=(\beta+1)DL}$. In this case, the rate is $D/(\beta+1)D$, or equivalently, ${\lceil \frac{K}{M+D}\rceil^{-1}}$. 

Next, we prove that the GPC-IP protocol is an IPIR-SI protocol. It should be easy to see that the recoverability condition is satisfied. We only need to prove that the GPC-IP protocol satisfies the individual-privacy condition. 

Consider an arbitrary ${Q \triangleq \{Q_0,\dots,Q_{\beta}\}}$. We need to show that ${\mathbb{P}(j\in \boldsymbol{W}|Q)=\mathbb{P}(j\in \boldsymbol{W})}$ for all $j\in \mathcal{K}$. Equivalently, it suffices to show $\mathbb{P}(j\in \boldsymbol{W}|Q)$ is the same for all $j\in \mathcal{K}$. 

First, suppose that $\rho<D$. It is easy to see that $\mathbb{P}(j\in \boldsymbol{W}|Q)$ is given by
\begin{dmath}\label{Eq:1}
\sum_{i=1}^{\beta} \sum_{\substack{W\subset Q_i: \\ |W|=D-\rho}} \sum_{\substack{S\subset Q_i\setminus W:\\ |S|=M}}\mathbb{P}(\boldsymbol{W}=Q_0\cup W, \boldsymbol{S}=S|Q)  
\end{dmath} for all $j\in Q_0$, and 
\begin{dmath}\label{Eq:2}
 \sum_{\substack{W\subset Q_i:\\ |W|=D, j\in W}} \mathbb{P}(\boldsymbol{W}=W, \boldsymbol{S}=Q_i\setminus W|Q) + {\sum_{\substack{W\subset Q_i:\\ |W|=D-\rho, j\in W}} \sum_{\substack{S\subset Q_i\setminus W:\\ |S|=M}} \mathbb{P}(\boldsymbol{W}=Q_0\cup W, \boldsymbol{S}=S|Q)}  
\end{dmath} for all $j\in Q_i$, $i\in [\beta]$. From~\eqref{Eq:1} and~\eqref{Eq:2}, one can see that $\mathbb{P}(j\in \boldsymbol{W}|Q)$ is the same for all $j\in Q_0$, say equal to $p_0$, and is the same for all $j\in Q_i$ and all $i\in [\beta]$, say equal to $p_1$. We need to show that $p_0$ and $p_1$ are equal. It is easy to show that $p_0$ and $p_1$ are equal if the two quantities
\begin{dmath}\label{Eq:3}
\sum_{i=1}^{\beta} \sum_{\substack{W\subset Q_i: \\ |W|=D-\rho}} \sum_{\substack{S\subset Q_i\setminus W:\\ |S|=M}} \mathbb{P}(Q|\boldsymbol{W}=Q_0\cup W, \boldsymbol{S}=S) 
\end{dmath} and 
\begin{dmath}\label{Eq:4}
 \sum_{\substack{W\subset Q_i:\\ |W|=D, j\in W}} \mathbb{P}(Q|\boldsymbol{W}=W, \boldsymbol{S}=Q_i\setminus W) + {\sum_{\substack{W\subset Q_i:\\ |W|=D-\rho, j\in W}} \sum_{\substack{S\subset Q_i\setminus W:\\ |S|=M}} \mathbb{P}(Q|\boldsymbol{W}=Q_0\cup W, \boldsymbol{S}=S)}  
\end{dmath} are equal. Fix an $i\in [\beta]$. For any $W\subset Q_i$, $|W|=D-\rho$, and any $S\subset Q_i\setminus W$, $|S|=M$, a simple counting yields
\begin{align*}
& \mathbb{P}(Q|\boldsymbol{W}=Q_0\cup W, \boldsymbol{S}=S) = \left(\frac{\theta_1}{\theta_1+\theta_2}\right) (\beta-1)!\left(\binom{D}{\rho}\binom{K-\alpha}{\rho}\prod_{i=1}^{\beta-1}\binom{K-i\alpha-\rho}{\alpha}\right)^{-1}.	
\end{align*} and accordingly,~\eqref{Eq:3} is equal to 
\begin{align*}
& \left(\frac{\theta_1}{\theta_1+\theta_2}\right) \binom{\alpha}{M+\rho}\binom{M+\rho}{M} \beta!\left(\binom{D}{\rho}\binom{K-\alpha}{\rho}\prod_{i=1}^{\beta-1}\binom{K-i\alpha-\rho}{\alpha}\right)^{-1}.	
\end{align*} For any $W\subset Q_i$, $|W|=D$ such that $j\in W$, we have 
\begin{align*}
& \mathbb{P}(Q|\boldsymbol{W}=W, \boldsymbol{S}=Q_i\setminus W) = \left(\frac{\theta_2}{\theta_1+\theta_2}\right) (\beta-1)!\left(\prod_{i=1}^{\beta-1}\binom{K-i\alpha}{\alpha}\right)^{-1}, 
\end{align*} and for any $W\subset Q_i$, $|W|=D-\rho$ such that $j\in W$, and any $S\subset Q_i\setminus W$, $|S|=M$, we have 
\begin{align*}
& \mathbb{P}(Q|\boldsymbol{W}=Q_0 \cup W, \boldsymbol{S}=S) = \left(\frac{\theta_1}{\theta_1+\theta_2}\right) (\beta-1)!\left(\binom{D}{\rho}\binom{K-\alpha}{\rho}\prod_{i=1}^{\beta-1}\binom{K-i\alpha-\rho}{\alpha}\right)^{-1}. 
\end{align*} Accordingly,~\eqref{Eq:4} is equal to 
\begin{align*}
& \binom{\alpha-1}{M} (\beta-1)! \left(\left(\frac{\theta_2}{\theta_1+\theta_2}\right)\left(\prod_{i=1}^{\beta-1}\binom{K-i\alpha}{\alpha}\right)^{-1}\right.\\
& \quad + \left(\frac{\theta_1}{\theta_1+\theta_2}\right)\left(\frac{D-\rho}{D}\right)\left.\left(\binom{K-\alpha}{\rho}\prod_{i=1}^{\beta-1}\binom{K-i\alpha-\rho}{\alpha}\right)^{-1}\right).	
\end{align*} It is easy to verify that~\eqref{Eq:3} and~\eqref{Eq:4} are equal for the choice of $\theta_1$ and $\theta_2$ defined as in the protocol. 

Next, consider the case of $\rho\geq D$. It is easy to see that $\mathbb{P}(j\in \boldsymbol{W}|Q)$ is given by
\begin{dmath}\label{Eq:5}
\sum_{\substack{W\subset Q_0: \\ |W|=D, j\in W}} \sum_{\substack{S\subset \mathcal{K}\setminus Q_0:\\ |S|=\alpha-\rho}}\mathbb{P}(\boldsymbol{W}=W, \boldsymbol{S}=S\cup Q_0\setminus W|Q)  
\end{dmath} for all $j\in Q_0$, and 
\begin{dmath}\label{Eq:6}
 \hspace{-0.5cm}\sum_{\substack{W\subset Q_i:\\ |W|=D, j\in W}} \mathbb{P}(\boldsymbol{W}=W, \boldsymbol{S}=Q_i\setminus W|Q)
\end{dmath} for all $j\in Q_i$, $i\in [\beta]$. Similarly as before, it can be seen that~\eqref{Eq:5} and~\eqref{Eq:6} are equal if the two quantities 
\begin{dmath}\label{Eq:7}
\sum_{\substack{W\subset Q_0: \\ |W|=D, j\in W}} \sum_{\substack{S\subset \mathcal{K}\setminus Q_0:\\ |S|=\alpha-\rho}}\mathbb{P}(Q|\boldsymbol{W}=W, \boldsymbol{S}=S\cup Q_0\setminus W)  
\end{dmath} and 
\begin{dmath}\label{Eq:8}
 \hspace{-0.5cm}\sum_{\substack{W\subset Q_i:\\ |W|=D, j\in W}} \mathbb{P}(Q|\boldsymbol{W}=W, \boldsymbol{S}=Q_i\setminus W)
\end{dmath} are equal. For any $W\subset Q_0$, $|W|=D$ such that $j\in W$, and any $S\subset \mathcal{K}\setminus Q_0$, $|S|=\alpha-\rho$, we have 
\begin{align*}
& \mathbb{P}(Q|\boldsymbol{W}=W, \boldsymbol{S}=S\cup Q_0\setminus W) = \left(\frac{\theta_1}{\theta_1+\theta_3}\right) \beta!\left(\binom{M}{\rho-D}\prod_{i=0}^{\beta-1}\binom{K-i\alpha-\rho}{\alpha}\right)^{-1}.	
\end{align*} and accordingly,~\eqref{Eq:7} is equal to 
\begin{align*}
& \left(\frac{\theta_1}{\theta_1+\theta_3}\right) \binom{\rho}{D}\binom{K-\rho}{\alpha-\rho}\beta!\left(\binom{M}{\rho-D}\prod_{i=0}^{\beta-1}\binom{K-i\alpha-\rho}{\alpha}\right)^{-1}.	
\end{align*} Fix an $i\in [\beta]$. For any $W\subset Q_i$, $|W|=D$ such that $j\in W$, we have 
\begin{align*}
& \mathbb{P}(Q|\boldsymbol{W}=W, \boldsymbol{S}=Q_i\setminus W) = \left(\frac{\theta_3}{\theta_1+\theta_3}\right) (\beta-1)!\left(\prod_{i=1}^{\beta-1}\binom{K-i\alpha}{\alpha}\right)^{-1}.	
\end{align*} and accordingly,~\eqref{Eq:8} is equal to 
\begin{align*}
& \left(\frac{\theta_3}{\theta_1+\theta_3}\right) \binom{\alpha-1}{M}(\beta-1)!\left(\prod_{i=1}^{\beta-1}\binom{K-i\alpha}{\alpha}\right)^{-1}.	
\end{align*} Again, for the choice of $\theta_1$ and $\theta_3$ as in the protocol, it is easy to verify that~\eqref{Eq:7} and~\eqref{Eq:8} are equal.
\end{proof}

\section{Proof of Theorem~\ref{thm:IPIRSI-Spc}}\label{sec:IPIRSI-Spc}
In this section, we first present a new combinatorial conjecture which, if holds, proves the tightness of the result of Theorem~\ref{thm:IPIRSI-Gen}. Next, we prove the simplest non-trivial case of this conjecture, yielding the tightness of the capacity lower bound in Theorem~\ref{thm:IPIRSI-Gen} for $M=1$ and $D=2$. 

Before stating the conjecture, we give two necessary conditions, due to individual-privacy and recoverability, which are essential to relate the IPIR-SI problem to our conjecture. 

\begin{lemma}\label{lem:2}
For any $W\in \mathcal{W}$ and $S\in \mathcal{S}$ where $S\cap W=\emptyset$, and any ${j\in \mathcal{K}}$, there must exist $W^{*}\in \mathcal{W}$, $j\in W^{*}$ and ${S^{*}\in \mathcal{S}}$ where $S^{*}\cap W^{*}=\emptyset$, such that \[H(X_{W^{*}}| A^{[W,S]}, Q^{[W,S]}, X_{S^{*}}) = 0.\] 	
\end{lemma}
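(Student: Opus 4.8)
The plan is to convert the \emph{individual-privacy} requirement into a statement about which admissible pairs $(W^*,S^*)$ could have produced the very query the aggregator issues in scenario $(W,S)$, and then to pull back the \emph{recoverability} guarantee from one such pair. The enabling structural fact is that the server's answer is a deterministic function of the received query and the database \emph{alone}: since the server never observes the realized $(\boldsymbol{W},\boldsymbol{S})$, the condition $H(A^{[W,S]}\mid Q^{[W,S]},\{X_i\}_{i\in\mathcal{K}})=0$ should be read as $\boldsymbol{A}=f(\boldsymbol{Q},\{X_i\}_{i\in\mathcal{K}})$ for a single map $f$; hence any two admissible scenarios that can emit the same query realization return the same answer to it.

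Concretely, fix $W,S,j$ as in the statement and a realization $q$ of $Q^{[W,S]}$, so $\mathbb{P}(\boldsymbol{Q}=q)>0$ since every admissible $(W,S)$ has positive prior probability. First I apply individual privacy conditioned on $\boldsymbol{Q}=q$: as $\mathbb{P}(j\in\boldsymbol{W}\mid\boldsymbol{Q}=q)=D/K>0$, there exist $W^*\in\mathcal{W}$ with $j\in W^*$ and $S^*\in\mathcal{S}$ with $\mathbb{P}(\boldsymbol{W}=W^*,\boldsymbol{S}=S^*,\boldsymbol{Q}=q)>0$, and admissibility of $(W^*,S^*)$ — in particular $S^*\cap W^*=\emptyset$ — is automatic because the prior puts no mass on pairs violating it; thus $q$ is a possible realization of $Q^{[W^*,S^*]}$. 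Next I invoke recoverability for $(W^*,S^*)$, i.e.\ $H(X_{W^*}\mid A^{[W^*,S^*]},Q^{[W^*,S^*]},X_{S^*})=0$; writing this as an average over query realizations of nonnegative terms, the term at $\boldsymbol{Q}=q$ (of positive probability under $(W^*,S^*)$) vanishes, so $X_{W^*}$ is a deterministic function of $X_{S^*}$ and the answer returned to $q$. Finally, that answer is $f(q,\{X_i\}_{i\in\mathcal{K}})$ regardless of scenario, which is exactly $A^{[W,S]}$ conditioned on $Q^{[W,S]}=q$; substituting gives $H(X_{W^*}\mid A^{[W,S]},Q^{[W,S]},X_{S^*})=0$. (The argument is run for a fixed $q$, so $Q^{[W,S]}$ and $A^{[W,S]}$ are read as a fixed query/answer realization, as elsewhere in the converse.)

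The step I expect to be the real obstacle is making the final substitution fully rigorous when the query is allowed to depend on $X_S$: recoverability under $(W^*,S^*)$ only licenses decoding $X_{W^*}$ for database realizations that co-occur with $q$ under $(W^*,S^*)$, whereas we need it for those arising under $(W,S)$. For the message-independent queries used by all known protocols these two families coincide and the substitution is immediate; in the fully general case one has to argue separately that the message-dependence of the query cannot interfere, which is the only point where the proof needs genuine care rather than routine bookkeeping.
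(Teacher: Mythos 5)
The paper does not actually supply a proof of Lemma~\ref{lem:2}; it only remarks that ``the proof is by the way of contradiction, and is omitted for brevity.'' So there is no paper argument to compare line by line. That said, your direct argument is the natural one and is almost certainly what a contrapositive phrasing of the intended proof would unwind to: individual privacy conditioned on a query realization $q$ forces some admissible $(W^{*},S^{*})$ with $j\in W^{*}$ to put positive mass on $q$; recoverability for $(W^{*},S^{*})$ then gives decodability of $X_{W^{*}}$ at $q$; and the server's answer at $q$ is the same deterministic function of $(q,\{X_i\}_{i\in\mathcal{K}})$ regardless of the aggregator's private scenario, which is what lets you carry the decodability over to the $(W,S)$ scenario.

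You have also correctly diagnosed the one point where the argument is not routine, and I want to confirm that it is a genuine subtlety rather than something you can wave away. Recoverability for $(W^{*},S^{*})$ at a fixed $q$ gives you, for each $x$ with $\Pr\big[Q^{[W^{*},S^{*}]}=q\mid X_{S^{*}}=x\big]>0$, that $X_{W^{*}}$ is determined by $\big(f(q,\mathbf X),q,x\big)$ over all databases $\mathbf X$ with $X_{S^{*}}=x$ (the rest of $\mathbf X$ is unconstrained by $Q=q$ under $(W^{*},S^{*})$, since $Q^{[W^{*},S^{*}]}$ depends only on $X_{S^{*}}$ and local randomness). That is actually a \emph{stronger} per-$(a,q,x)$ statement than what you need under $(W,S)$, because conditioning on $Q^{[W,S]}=q$ can only restrict $\mathbf X$ further. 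So the transfer works cleanly for every $(a,q,x)$ that has positive probability under \emph{both} scenarios. The only remaining hole is exactly the one you flag: a triple $(a,q,x)$ with $x$ outside the support of $X_{S^{*}}$ given $Q^{[W^{*},S^{*}]}=q$ can still have positive probability under $(W,S)$ (when $S^{*}\cap S=\emptyset$, any $x$ can co-occur with $q$), and for such a triple the recoverability hypothesis is silent. Nothing in the individual-privacy condition, which constrains the posterior of $\boldsymbol{W}$ but not of $X_{S^{*}}$, rules out the query non-trivially encoding $X_{S^{*}}$ and thereby shrinking that support. Your proposal therefore proves the lemma for protocols whose queries are independent of the database (which covers the GPC-IP protocol of the paper and all standard constructions), but leaves the fully general case open; to close it you would either need a separate argument that message-dependent queries can be removed without affecting the converse, or to show directly that the decoder for $(W^{*},S^{*})$ can be extended consistently beyond its natural support. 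Since the paper's own proof is omitted, I cannot tell you whether the authors address this or implicitly work with message-independent queries, but the obstacle you isolated is real and is the right thing to resolve before calling the proof complete.
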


\begin{proof}
The proof is by the way of contradiction, and is omitted for brevity. 	
\end{proof}

\begin{lemma}\label{lem:3}
For any $W\in \mathcal{W}$ and $S\in \mathcal{S}$ where $S\cap W=\emptyset$, and any $J\subseteq \mathcal{K}$, if $\mathbb{P}(\cup_{j\in J} E_j|Q^{[W,S]})=1$, then $|J|\geq \frac{K}{D}$, where $E_j$ for $j\in J$ is the event that $j\in \boldsymbol{W}$. 
\end{lemma}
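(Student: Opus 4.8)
The plan is to derive the bound directly from the individual-privacy condition via a union bound. Fix $W$, $S$, and a query realization $Q^{[W,S]}$, and let $J\subseteq\mathcal{K}$ be such that $\mathbb{P}(\cup_{j\in J}E_j\mid Q^{[W,S]})=1$. The only input we need is that, by the individual-privacy condition, $\mathbb{P}(j\in\boldsymbol{W}\mid Q^{[W,S]})=\mathbb{P}(j\in\boldsymbol{W})=D/K$ for every $j\in\mathcal{K}$; that is, conditioning on the query does not change any single-index posterior marginal.

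First I would apply the union bound conditionally on $Q^{[W,S]}$:
\[
1=\mathbb{P}\Bigl(\bigcup_{j\in J}E_j\,\Big|\,Q^{[W,S]}\Bigr)\le \sum_{j\in J}\mathbb{P}\bigl(E_j\mid Q^{[W,S]}\bigr).
\]
Then I would substitute $\mathbb{P}(E_j\mid Q^{[W,S]})=D/K$ for each $j\in J$, which is exactly the individual-privacy condition applied to index $j$. This yields $1\le |J|\cdot D/K$, and rearranging gives $|J|\ge K/D$, as claimed.

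There is essentially no obstacle here: the statement is an immediate corollary of individual privacy, and the only point requiring care is that the privacy guarantee is pointwise in the query realization, so the union bound must be applied conditionally on the specific $Q^{[W,S]}$ rather than in expectation over queries. The substantive content of Lemma~\ref{lem:3} for the converse of Theorem~\ref{thm:IPIRSI-Spc} lies not in its (short) proof but in how it is subsequently combined with Lemma~\ref{lem:2} and the recoverability condition.
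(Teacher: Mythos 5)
Your proof is correct and matches the paper's argument exactly: union bound conditioned on the query realization, then substitute $\mathbb{P}(E_j\mid Q^{[W,S]})=D/K$ from the individual-privacy condition, and rearrange. No differences of substance.
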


\begin{proof}
Take any $J\subseteq \mathcal{K}$ such that $\mathbb{P}(\cup_{j\in J} E_j|Q^{[W,S]})=1$. By the union bound, $\mathbb{P}(\cup_{j\in J} E_j|Q^{[W,S]})$ is bounded from above by $\sum_{j\in J} \mathbb{P} (E_j|Q^{[W,S]})$, or equivalently, $\frac{|J|D}{K}$, noting that $\mathbb{P}(E_j|Q^{[W,S]})=\frac{D}{K}$ for all $j\in \mathcal{K}$ (by the individual-privacy condition). Since $\frac{|J|D}{K}\geq 1$, then $|J|\geq \lceil \frac{K}{D}\rceil$. 
\end{proof}



We would like to show that $H(A^{[W,S]})$, or particularly $H(A^{[W,S]}|Q^{[W,S]})$, for any protocol $(Q^{[W,S]},A^{[W,S]})$ that satisfies the conditions in Lemmas~\ref{lem:2} and~\ref{lem:3}, is bounded from below by ${\min\{K-M\lfloor \frac{K}{M+D}\rfloor, D\lceil\frac{K}{M+D}\rceil\}}$. Any such a protocol can be represented by an oracle as follows.  


 Let $K\geq 1$, $M\geq 1$, and $D\geq 2$ be arbitrary integers such that $D+M\leq K$. Let $\mathcal{I}$ and $\mathcal{J}$ be the set of all subsets $I$ and $J$ of $\mathcal{K}\triangleq [K]$ such that $0\leq |I|\leq M$ and $|J|\geq D$, respectively. Let $f: \mathcal{I}\rightarrow \mathcal{J}$ be an arbitrary set relation (mapping). A relation $f$ is called \emph{good} if the following conditions hold: 
 \begin{itemize}
 \item[(i)] 	$I\subseteq f(I)$ for any $I\in \mathcal{I}$; 
 \item[(ii)] For any $j\in \mathcal{K}$, there exist $I\in \mathcal{I}$ and $J\in \mathcal{J}$, $|J|=D$, $j\in J$ where $I\cap J=\emptyset$ such that $J\subseteq f(I)$; 
 \item[(iii)] For any $I_1,I_2\in \mathcal{I}$, if $I_2\subseteq f(I_1)$, then $f(I_2)\subseteq f(I_1)$;
 \item[(iv)] For any $J^{*}\subseteq \mathcal{K}$, $|J^{*}|< \lceil \frac{K}{D}\rceil$ there exists (non-empty) $I\in \mathcal{I}$ such that $f(I)\cap J^{*}= \emptyset$. 
 \end{itemize}

Thinking of the $l$-subsets (for $0\leq l\leq M$) in $\mathcal{I}$ as the potential side information index sets $S^{*}$ and the $D$-subsets in $\mathcal{J}$ as the possible demand index sets $W^{*}$, one can observe that a good relation $f$, satisfying the conditions (i)-(iv), represents an arbitrary protocol that satisfies the conditions in Lemmas~\ref{lem:2} and~\ref{lem:3}. Then, it holds that for any IPIR-SI protocol, $H(A^{[W,S]}|Q^{[W,S]})\geq K-\theta$ (for any integer $\theta\geq 0$) so long as for any good relation $f$ (defined earlier) there exists a subset $I^{*}\subseteq \mathcal{K}$ of size at most $\theta$ such that the union of $f(I)$ for all $I\subseteq I^{*}$ is equal to $\mathcal{K}$. This is because, thinking of $f$ (or in turn, $(Q^{[W,S]}, A^{[W,S]})$) as an oracle, given the messages $\{X_j\}_{j\in I^{*}}$, all other messages $\{X_j\}_{j\in \mathcal{K}\setminus I^{*}}$ are recoverable from $A^{[W,S]}$ and $Q^{[W,S]}$; and hence, $H(A^{[W,S]}|Q^{[W,S]})\geq K-|I^{*}|\geq K-\theta$, as desired. 




\begin{conjecture}\label{conj:1}
For any good relation $f$, there exists $I^{*}\subset \mathcal{K}$, $|I^{*}|\leq \max\{K-D\lceil \frac{K}{M+D}\rceil, M \lfloor \frac{K}{M+D}\rfloor\}$ such that $\cup_{I\subseteq I^{*}} f(I)=\mathcal{K}$.
\end{conjecture}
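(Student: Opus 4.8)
The plan is to convert the conjecture into a statement about directed graphs (when $M=1$) and, in general, about closure operators, and then to exploit the rigidity forced by axiom~(iii). As a first structural step I would show that a good relation is a ``reachability closure''. For $M=1$, put an arc $i\to j$ in a digraph $G$ on $\mathcal{K}$ whenever $j\in f(\{i\})\setminus\{i\}$, so that $f(\{i\})=\{i\}\cup N^{+}_{G}(i)$; axiom~(iii) then reads $i\to j\Rightarrow f(\{j\})\subseteq f(\{i\})$, and iterating along a directed path gives $f(\{i\})=\mathrm{Reach}_{G}(i)$, the set of vertices reachable from $i$. Hence every strongly connected component of size $\ge 2$ is a bidirectional clique, all vertices of one component share a reach-set, and the condensation $\hat G$ is transitively closed; moreover $\bigcup_{I\subseteq I^{*}}f(I)=\bigcup_{i\in I^{*}}\mathrm{Reach}_{G}(i)$, which equals $\mathcal{K}$ exactly when $I^{*}$ meets every source component of $\hat G$. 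So for $M=1$ the conjecture asserts that $G$ has an external mother vertex-set (a transversal of its source components) of size at most $\lfloor K/(M+D)\rfloor$, the reformulation advertised in the remark. For $M\ge 2$ the same argument turns $f$ into a closure operator $\mathrm{cl}$ on the subsets of $\mathcal{K}$ of size $\le M$, with $\bigcup_{I\subseteq I^{*}}f(I)=\mathrm{cl}(I^{*})$, and one seeks a minimum ``generator'' of $\mathcal{K}$.

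Next I would extract the quantitative input from (i), (ii), (iv). Axiom~(ii) says every vertex has an in-neighbour whose reach-set contains, besides that in-neighbour, a $D$-subset disjoint from it; such an in-neighbour therefore has out-degree $\ge D$, so every source component has reach-set of size $\ge D+1$ and only out-degrees $0$ or $\ge D$ occur on the relevant part of $G$ (the dichotomy of the remark). Axiom~(iv) is the dual estimate: any set meeting every sink component of $\hat G$ (equivalently, an internal mother vertex-set of the transpose $G^{T}$) has size $\ge\lceil K/D\rceil$, so there are at least $\lceil K/D\rceil$ sink components, each of which, by~(i) and the dichotomy, is either a single dead-end vertex or has size $\ge D+1$.

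The heart of the proof is then a counting step: bound the number of source components by $\max\{K-D\lceil K/(M+D)\rceil,\,M\lfloor K/(M+D)\rfloor\}$ by confronting the vertex budget $K\ge\sum(\text{source-component sizes})+\sum(\text{sizes of non-source sink components})$ with the lower bound on the number of sink components coming from~(iv). For $M=1$, $D=2$ this reduces to a finite case analysis of $\hat G$: every source component of size $2$ must have an outgoing arc (both of its vertices need an in-neighbour of out-degree $\ge 2$, and that in-neighbour lies inside the component since the component is a source), hence is not a sink; feeding this into the two inequalities above shows that more than $\lfloor K/3\rfloor$ source components cannot coexist with the $\ge\lceil K/2\rceil$ sink components on $K$ vertices. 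This is the only case I would carry out in full, matching the scope of Theorem~\ref{thm:IPIRSI-Spc}.

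The main obstacle is precisely this last step for general $M$ and $D$. Once $M\ge 2$ the digraph picture dissolves: $f$ genuinely lives on the lattice of $\le M$-subsets, the ``source components'' of a closure operator are not canonically defined, and the two-sided inequalities used above no longer pin the answer to the exact constant $\max\{K-D\lceil K/(M+D)\rceil,\,M\lfloor K/(M+D)\rfloor\}$. Matching that constant appears to need a linear-programming / fractional-cover argument dual to the placement probabilities $\theta_{1},\theta_{2},\theta_{3}$ of the GPC-IP protocol in Section~\ref{sec:IPIRSI-Gen}, rather than the elementary counting that settles $M=1$, $D=2$, and this is why I would leave the general conjecture open.
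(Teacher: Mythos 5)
The statement you were given is a \emph{conjecture}: the paper states it without proof and establishes only the smallest non-trivial case ($M=1$, $D=2$) via its graph-theoretic reformulation (Conjecture~\ref{conj:2} and Lemma~\ref{lem:4}). You correctly recognize this, and your $M=1$ reduction to a digraph statement (build $G$ by $i\to j$ iff $j\in f(\{i\})\setminus\{i\}$, show axiom~(iii) forces $f(\{i\})=\mathrm{Reach}_G(i)$, read off the in-degree, out-degree and $\mu_{\mathrm{int}}(G^{\mathsf T})$ constraints from (i), (ii), (iv)) does track the paper's Conjecture~\ref{conj:2} faithfully, including the observation that $\bigcup_{I\subseteq I^*}f(I)=\bigcup_{i\in I^*}\mathrm{Reach}_G(i)$.

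Where you diverge is in the proof of the $M=1$, $D=2$ case. The paper's Lemma~\ref{lem:4} does not go through the SCC condensation $\hat G$ at all: it picks a partition of $V$ into $n=\mu_{\mathrm{ext}}(G)$ parts $V_j$, each with a ``mother'' node $v_j$ reaching all of $V_j$, uses minimality of the external mother vertex-set to show $v_j$ is unreachable from outside $V_j$, hence has an in-neighbour $u_j\in V_j$ and $|V_j|\geq 2$; for parts of size $2$ it uses the out-degree dichotomy (together with minimality again, to rule out the extra out-neighbour being some $u_{j'}$ or $v_{j'}$) to show the extra out-neighbour of $v_j$ lands in $J^*\triangleq V\setminus\{v_j,u_j\}_j$, and concludes $\mu_{\mathrm{int}}(G^{\mathsf T})\leq K-2n<\lceil K/2\rceil$, contradicting axiom~(iv). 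Your route via source and sink SCCs of $\hat G$ and a ``vertex-budget'' inequality is a genuinely different decomposition; it can be made to work, but as written it has gaps. In particular: you observe that size-$2$ source components of $\hat G$ cannot be isolated, but do not address source components of size $\geq 3$, which \emph{can} be simultaneously sources and sinks and hence double-count in the budget; you silently identify $\mu_{\mathrm{int}}(G^{\mathsf T})$ with the number of sink SCCs of $G$, which needs the (non-trivial but true) inequality $\mu_{\mathrm{int}}(G^{\mathsf T})\leq\#\{\text{sink SCCs of }G\}$; and the ``two inequalities'' you invoke and the final ``cannot coexist'' step are never written down. By contrast, the paper's explicit construction of a small $J^*$ sidesteps all of this. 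Your assessment that the general $M,D$ case requires a fundamentally different (LP-dual / fractional-cover) argument, and that you would leave it open, correctly matches the scope of the paper.
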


For $M=1$ and $D\geq 2$, the statement of Conjecture~\ref{conj:1} can be rephrased in the language of graph theory as follows. Let $G = (V,E)$ be an arbitrary directed graph (without parallel edges), where $V$ and $E$ are the set of nodes and edges of $G$, respectively. Denote by $d_{\mathrm{in}}(v)$ and $d_{\mathrm{out}}(v)$ the in-degree and out-degree of node $v\in V$, respectively, over $G$. We define an \emph{external} (or respectively, \emph{internal}) \emph{mother vertex-set} of $G$ as a minimal subset $I^{*}$ of nodes in $V$ from which all other nodes $u$ in $V\setminus I^{*}$ such that $d_{\mathrm{out}}(u)\neq 0$ (or respectively, $d_{\mathrm{in}}(u)\neq 0$) can be reached (i.e., for any $u\in V\setminus I^{*}$, $d_{\mathrm{out}}(u)\neq 0$ (or respectively, $d_{\mathrm{in}}(u)\neq 0$), there exists $v\in I^{*}$ such that there is a directed path from $v$ to $u$ in $G$), and denote the size of an external (or respectively, internal) mother vertex-set $I^{*}$ of $G$ by $\mu_{\mathrm{ext}}(G)$ (or respectively, $\mu_{\mathrm{int}}(G)$). Also, let $G^{\mathsf{T}}$ be the transpose of $G$, which is formed by reversing the direction of all edges in $G$ (i.e., $G^{\mathsf{T}}=(V,E^{\mathsf{T}})$, where $E^{\mathsf{T}} = \{(u,v): (v,u)\in E\}$). We call $G$ a \emph{$D$-graph} if the following conditions hold: 
\begin{itemize}
\item[(i)] For any $v\in V$, $d_{\mathrm{in}}(v)\geq 1$; 
\item[(ii)] For any $v\in V$, either $d_{\mathrm{out}}(v)=0$, or $d_{\mathrm{out}}(v)\geq D$; 
\item[(iii)] $\mu_{\mathrm{int}}(G^{\mathsf{T}})\geq \lceil\frac{K}{D}\rceil$. 
\end{itemize}

\begin{conjecture}\label{conj:2}
For any $D$-graph $G$ on $K$ nodes, ${\mu_{\mathrm{ext}}(G)\leq \lfloor \frac{K}{D+1}\rfloor}$. 	
\end{conjecture}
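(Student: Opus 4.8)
The plan is to settle the case $D=2$, which is the simplest non-trivial case and the one needed for Theorem~\ref{thm:IPIRSI-Spc}; for general $D$ one extra idea seems to be required, which I describe at the end.

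First I would translate both $\mu_{\mathrm{ext}}$ and $\mu_{\mathrm{int}}$ into the language of the strongly connected components (SCCs) of $G$. Working in the condensation DAG, one checks that every external mother vertex-set of $G$ must meet each source SCC (a source SCC has no incoming edges, and in a $D$-graph all of its vertices have nonzero out-degree by conditions (i)--(ii)), while conversely one vertex chosen from each source SCC already reaches every vertex of $G$; hence $\mu_{\mathrm{ext}}(G)$ is exactly the number $s$ of source SCCs of $G$. Symmetrically, $\mu_{\mathrm{int}}(G^{\mathsf{T}})$ equals the minimum size of a set $R\subseteq V$ such that every vertex $u$ with $d_{\mathrm{out}}(u)\neq 0$ can reach $R$ along a directed path of $G$; since every vertex of $G$ reaches some sink SCC, choosing one vertex from each sink SCC of $G$ is such an $R$, so $\mu_{\mathrm{int}}(G^{\mathsf{T}})$ is at most the number of sink SCCs of $G$. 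Thus condition (iii) forces $G$ to have many sink SCCs, and the target $\mu_{\mathrm{ext}}(G)=s\le\lfloor K/(D+1)\rfloor$ becomes the assertion that ``many source SCCs'' and ``many sink SCCs'' cannot coexist on $K$ vertices.

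To make this quantitative, suppose for contradiction that $s\ge\lfloor K/(D+1)\rfloor+1$ and split the SCCs of $G$ by type. From (i)--(ii): every source SCC has at least $2$ vertices; a source SCC that is also a sink (an \emph{isolated} SCC) has at least $D+1$ vertices, because all of its out-edges remain inside it; a sink SCC that is not a source is either a single vertex of out-degree $0$ or has at least $D+1$ vertices; and a singleton sink has out-degree $0$, hence need not be reached. Let $s_0$ be the number of isolated SCCs, $s_1=s-s_0$ the number of non-isolated source SCCs, and $a$, $b$ the numbers of singleton and non-singleton sink-only SCCs. Counting vertices,
\[
K\;\ge\;(D+1)s_0+2s_1+a+(D+1)b\;=\;2s+(D-1)s_0+a+(D+1)b ,
\]
so the number of sink SCCs of $G$ satisfies $s_0+a+b\le K-2s$. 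For $D=2$, the hypothesis $s\ge\lfloor K/3\rfloor+1$ yields $K-2s\le K-2\lfloor K/3\rfloor-2<\lceil K/2\rceil$ by a one-line estimate, so $\mu_{\mathrm{int}}(G^{\mathsf{T}})\le s_0+a+b\le K-2s<\lceil K/D\rceil$, contradicting condition (iii). Hence $s\le\lfloor K/3\rfloor$, i.e.\ $\mu_{\mathrm{ext}}(G)\le\lfloor K/(D+1)\rfloor$ when $D=2$.

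The step I expect to be the real obstacle (and the reason the conjecture stays open for $D\ge 3$) is precisely this last inequality. The ``one vertex per sink SCC'' reaching set has size at most $K-2s$, and for $s\approx K/(D+1)$ this is about $\frac{D-1}{D+1}K$, which falls below $\lceil K/D\rceil\approx K/D$ only when $\frac{D-1}{D+1}<\frac1D$, i.e.\ only for $D\le 2$. For $D\ge 3$ one must instead exhibit a strictly smaller set $R$ certifying $\mu_{\mathrm{int}}(G^{\mathsf{T}})<\lceil K/D\rceil$; I would try to obtain it by upgrading condition (iii) into a structural statement --- for instance, that a $D$-graph with more than $\lfloor K/(D+1)\rfloor$ source SCCs must contain a large antichain of out-degree-nonzero vertices under the reachability order, so that its sink components, though numerous, can be covered far more cheaply than one vertex each. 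Turning that into a proof, or finding a counterexample, is what I would pursue next.
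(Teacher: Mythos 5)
Your argument for $D=2$ is correct and reaches the same conclusion as the paper's Lemma~\ref{lem:4}, but via a cleaner structural decomposition. The paper avoids strongly connected components altogether: it partitions $V$ into $n=\mu_{\mathrm{ext}}(G)$ parts $V_1,\dots,V_n$, each with a designated root $v_j$ that reaches all of $V_j$, argues by minimality that each $V_j$ has a second vertex $u_j$, and then shows directly that $J^{*}=V\setminus\{v_j,u_j\}_j$ is an internal mother vertex-set of $G^{\mathsf{T}}$, giving $\mu_{\mathrm{int}}(G^{\mathsf{T}})\le K-2n<\lceil K/2\rceil$. Your SCC formulation replaces that ad hoc partition by the condensation DAG: you identify $\mu_{\mathrm{ext}}(G)$ with the number $s$ of source SCCs, bound the number of sink SCCs by $K-2s$ via the size constraints (i)--(ii) impose on each SCC type, and use ``one vertex per sink SCC'' as the small reaching set. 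The quantitative core is identical --- a set of size at most $K-2s$ certifies $\mu_{\mathrm{int}}(G^{\mathsf{T}})<\lceil K/2\rceil$ when $s>\lfloor K/3\rfloor$ --- but the SCC language makes both the exact characterization of $\mu_{\mathrm{ext}}$ and the vertex count transparent, and it sidesteps a slightly delicate point in the paper's size-two-part case (where the node $w$ hit by $v_j$ lies in another part, and one must argue it is ultimately reachable from $J^{*}$). Your closing observation --- that the bound $K-2s$ beats $\lceil K/D\rceil$ only for $D\le 2$, so a genuinely smaller reaching set than ``one per sink SCC'' would be needed for $D\ge 3$ --- correctly pinpoints why this line of attack stops at $D=2$, consistent with the paper leaving Conjecture~\ref{conj:2} open for larger $D$.
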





Note that Conjecture~\ref{conj:2} is equivalent to Conjecture~\ref{conj:1} for $M=1$. (Since $K-D\lceil \frac{K}{D+1}\rceil\leq \lfloor \frac{K}{D+1}\rfloor$ for any $D\leq K-1$, then the upper bound on $|I^{*}|$ in Conjecture~\ref{conj:1} for $M=1$ reduces to $\lfloor\frac{K}{D+1}\rfloor$.) For any $D$-graph $G = (V,E)$ on $K$ nodes, we can define $f(v)$ for any $v\in V$ as the set of all nodes (including $v$) that can be reached from node $v$ (via a directed path in $G$). Then, it is easy to verify that $f$ satisfies the conditions (i)-(iv) for a good relation. Note also that $\mu_{\mathrm{ext}}(G)$ represents the size of a (minimal) subset $I^{*}\subseteq V$ such that $\cup_{v\in I^{*}} f(v) = V$. This indeed shows the equivalence between the two conjectures for $M=1$.  

In the following, we prove Conjecture~\ref{conj:2} for $M=1$ and $D=2$, and hence the proof of Theorem~\ref{thm:IPIRSI-Spc}.  


\begin{lemma}\label{lem:4}
For any $2$-graph $G$ on $K$ nodes, $\mu_{\mathrm{ext}}(G)\leq \lfloor \frac{K}{3}\rfloor$. 	
\end{lemma}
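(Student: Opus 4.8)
The plan is to pass to the condensation of $G$ (the directed acyclic graph of strongly connected components, which I will call \emph{components} below), read $\mu_{\mathrm{ext}}(G)$ and $\mu_{\mathrm{int}}(G^{\mathsf T})$ off the source and sink components, and then finish with an elementary count. First I would record what the $2$-graph conditions say at the component level: by (i) no source component can be a single node (that node would have in-degree $0$), so every source component has $\geq 2$ nodes; by (ii) no sink component can have exactly two nodes (it would be a $2$-cycle, forcing out-degree $1$), so every sink component is either a single out-degree-$0$ node or has $\geq 3$ nodes, and in particular every isolated component has $\geq 3$ nodes. Write $s$ for the number of source components, $t$ for the number of sink components of size $\geq 3$, $z$ for the number of out-degree-$0$ nodes (equivalently, singleton sink components), and $n$ for the number of nodes lying in non-sink components; since these three kinds of components partition $V$, we get $K \geq 3t + z + n$.

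Next I would identify the two mother-vertex quantities. An external mother vertex-set of $G$ must meet every source component (every node of such a component has positive out-degree, so it is either chosen or must be reached, and it can be reached only from within its own component), and conversely one node per source component already reaches every node of positive out-degree, because every component is reachable from some source component in the condensation and a directed path in the condensation lifts to one in $G$; hence $\mu_{\mathrm{ext}}(G) = s$ (for the lemma only $\mu_{\mathrm{ext}}(G) \leq s$ is needed). For $\mu_{\mathrm{int}}(G^{\mathsf T})$, unwinding the transpose, it equals the least size of a set $I^{*}\subseteq V$ such that for every node $u$ of positive out-degree in $G$ the set of nodes reachable from $u$ in $G$ meets $I^{*}$; taking $I^{*}$ to consist of one node from each sink component of size $\geq 3$ together with all $z$ out-degree-$0$ nodes gives such a set, since every component of $G$ reaches some sink component, so $\mu_{\mathrm{int}}(G^{\mathsf T}) \leq t + z$.

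Now I would combine these. Condition (iii) with $D=2$ gives $t + z \geq \mu_{\mathrm{int}}(G^{\mathsf T}) \geq \lceil K/2\rceil \geq K/2$, so $2(t+z) \geq K \geq 3t + z + n$, which rearranges to $z \geq t + n$. Splitting the source components into the isolated ones, each of which is one of the $t$ sink components of size $\geq 3$, and the non-isolated ones, each of which contributes at least two of the $n$ non-sink nodes, yields $s \leq t + n/2$. Hence $3s \leq 3t + \tfrac32 n \leq 3t + 2n \leq 3t + z + n \leq K$, where the third inequality uses $z \geq n$ (itself a consequence of $z \geq t + n$). Therefore $s \leq K/3$, and since $s$ is an integer, $\mu_{\mathrm{ext}}(G) = s \leq \lfloor K/3\rfloor$.

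The step I expect to be delicate is the second paragraph: getting the identification of $\mu_{\mathrm{int}}(G^{\mathsf T})$ right as one passes through the transpose (a directed path into a positive-in-degree node of $G^{\mathsf T}$ is a reversed directed path out of a positive-out-degree node of $G$), and verifying that the two exhibited sets really are mother vertex-sets by the ``lift a condensation path to a path in $G$'' argument; after that the $2$-graph hypotheses (i)--(ii) enter only through the component-size bounds and (iii) only through the single inequality $t+z \geq \lceil K/2\rceil$, and the rest is the arithmetic above. This outline is also the natural starting point for general $D$ in Conjecture~\ref{conj:2} (replace ``$\geq 3$'' by ``$\geq D+1$'' and rerun the count), the obstruction there being that the bound $\mu_{\mathrm{int}}(G^{\mathsf T}) \leq t+z$ is no longer strong enough and one needs a finer analysis of which sink components a minimum cover is actually forced to use.
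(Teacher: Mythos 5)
Your proof is correct, and it takes a genuinely different route from the paper's. The paper argues by contradiction: it fixes a minimum external mother vertex-set $I^{*}=\{v_1,\dots,v_n\}$, builds an ad hoc partition $V_1,\dots,V_n$ of $V$ into ``reachability clusters'' rooted at the $v_j$, uses minimality and condition (i) to extract a second node $u_j$ per part, and then exhibits $J^{*}=V\setminus\{v_j,u_j\}_{j\in[n]}$ as an internal mother vertex-set of $G^{\mathsf T}$ of size $K-2n$, contradicting (iii). You instead pass to the condensation of $G$: condition (i) forces every source SCC to have $\geq 2$ nodes, condition (ii) forces every sink SCC to be a singleton out-degree-$0$ node or have $\geq 3$ nodes, you identify $\mu_{\mathrm{ext}}(G)$ with the number $s$ of source SCCs and bound $\mu_{\mathrm{int}}(G^{\mathsf T})\leq t+z$ by one node per size-$\geq 3$ sink plus the out-degree-$0$ nodes, and the rest is the chain $3s\leq 3t+\tfrac32 n\leq 3t+z+n\leq K$. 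This has two concrete advantages: the structure of the argument makes transparent exactly where each of (i), (ii), (iii) is used (singleton-source exclusion, size-$2$-sink exclusion, and the sink count, respectively), and it avoids the paper's somewhat delicate bootstrapping step where one must verify that the nodes $\{v_j,u_j\}$ from size-$2$ parts are reached from $J^{*}$ and not merely from some other $v_i$ or $u_i$. One small caveat applying to both your proof and the paper's: both implicitly assume $G$ has no self-loops (the paper when it infers from $d_{\mathrm{in}}(v_j)\geq 1$ that $v_j$ is reached from another node; you when identifying singleton sinks with out-degree-$0$ nodes); if self-loops were allowed, a singleton SCC with a self-loop would have out-degree $1$, which is then ruled out by (ii) anyway, so the lemma still holds, but it is worth stating the convention. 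Your closing observation that the bound $\mu_{\mathrm{int}}(G^{\mathsf T})\leq t+z$ is the step that fails to generalize directly to $D>2$ is accurate and matches the fact that Conjecture~\ref{conj:2} remains open.
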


\begin{proof}
Let $G$ be an arbitrary $2$-graph on $K$ nodes. Suppose that $\mu_{\mathrm{ext}}(G)>\lfloor \frac{K}{3}\rfloor$. We need to show a contradiction. Let $n\triangleq \mu_{\mathrm{ext}}(G)$. Consider an arbitrary partition of the nodes in $G$ into $n$ parts, $V_1,\dots,V_n$, such that each part $V_j$ contains a node $v_j$ from which all other nodes in $V_j$ can be reached. (Note that a node in a part can potentially reach some other nodes in other parts.) Obviously, $I^{*}\triangleq \{v_1,\dots,v_n\}$ is an external mother vertex-set of $G$. 

By the minimality of $I^{*}$, it follows that no node $v_j$ can be reached from any node out of the part $V_j$. (Otherwise, from the nodes in $I^{*}\setminus \{v_j\}$ all other nodes can be reached, and this contradicts the minimality of $I^{*}$.) Since $d_{\mathrm{in}}(v_j)\geq 1$ (by definition), then there must exist another node $u_j$ in $V_j$ that reaches $v_j$. Also, no part $V_j$ can contain only a single node $v_j$, simply because $d_{\mathrm{in}}(v_j)\geq 1$, and the node $v_j$ can be reached from some other node(s) in some other part(s), which again contradicts the minimality of $I^{*}$. 

Take an arbitrary part $V_j = \{v_j,u_j\}$ of size $2$ (if exists). Since $v_j$ reaches $u_j$ (over $G$), then ${d_{\mathrm{out}}(v_j)\geq 1}$, and particularly, ${d_{\mathrm{out}}(v_j)\geq 2}$ (noting that $G$ is a $2$-graph). Thus, the node $v_j$ reaches some other node(s), say $w$, in some other part(s) over $G$. Equivalently, the node $w$ reaches both nodes $v_j$ and $u_j$ over $G^{\mathsf{T}}$. For any other part $V_j$ of size $i\geq 3$, the nodes $v_j$ and $u_j$ can be reached from each node in $V_j\setminus \{v_j,u_j\}$ over $G^{\mathsf{T}}$. 

Putting these arguments together, it follows that each node in $\{v_j,u_j\}_{j\in [n]}$ can be reached from some node(s) in $J^{*}\triangleq V\setminus \{v_j,u_j\}_{j\in [n]}$ via a directed path in $G^{\mathsf{T}}$. Then, $\mu_{\mathrm{int}}(G^{\mathsf{T}})\leq |J^{*}|= K-2n$. By assumption, $\mu_{\mathrm{ext}}(G)=n>\lfloor\frac{K}{3}\rfloor$. Thus, $|J^{*}|<K-2\lfloor \frac{K}{3}\rfloor$, and consequently, $\mu_{\mathrm{int}}(G^{\mathsf{T}}) <K-2\lfloor \frac{K}{3}\rfloor$. Since ${K-2\lfloor \frac{K}{3}\rfloor\leq \lceil\frac{K}{2}\rceil}$, then $\mu_{\mathrm{int}}(G^{\mathsf{T}}) <\lceil\frac{K}{2}\rceil$. This is a contradiction because $\mu_{\mathrm{int}}(G^{\mathsf{T}})\geq \lceil \frac{K}{2}\rceil$ for any $2$-graph $G$ on $K$ nodes. 
\end{proof}

\bibliographystyle{IEEEtran}
\bibliography{PIR_salim,pir_bib,coding1,coding2}

\end{document}